\theoremstyle{definition} 
\newtheorem {theorem} {Theorem}
\title{History Dependent Quantum Walk on the Cycle with an Unbalanced Coin}
\author{Walter O. Krawec\\\small{Stevens Institute of Technology}\\\small{Hoboken NJ, 07030 USA}\\\small{\texttt{walter.krawec@gmail.com}}}
\newcommand{\MoveLeft}{\downarrow}
\newcommand{\MoveRight}{\uparrow}
\begin{document}
\maketitle

\begin{abstract}
Recently, a new model of quantum walk, utilizing recycled coins, was introduced; however little is yet known about its properties.  In this paper, we study its behavior on the cycle graph.  In particular, we will consider its time averaged distribution and how it is affected by the walk's ``memory parameter'' - a real parameter, between zero and eight, which affects the walk's coin flip operator.  Despite an infinite number of different parameters, our analysis provides evidence that only a few produce non-uniform behavior.  Our analysis also shows that the initial state, and cycle size modulo four all affect the behavior of this walk.  We also prove an interesting relationship between the recycled coin model and a different memory-based quantum walk recently proposed.
\end{abstract}

\textbf{PACS:} 03.67.-a\newline

\section{INTRODUCTION}

Quantum random walks, the quantum analogue of classical random walks, have been a topic of much interest lately due in large part to their many properties and applications including (just to list a few): searching \cite{QW-search1}, subset finding \cite{QW-subset-finding}, and they have been shown to be universal for quantum computation \cite{QW-universal1,QW-universal2}.  The reader is referred to Ref \cite{QW-intro1} for an introduction to quantum walks, and Ref \cite{QW-survey} for a general survey.

Recently, several new models of history dependent quantum walks have been proposed including the quantum walk with memory \cite{QW-memory1d,QW-memory-cycle}, the non-repeating quantum walk \cite{QW-non-repeating}, and the quantum walk with recycled coins \cite{QW-recy-coin}.  However, due to their relatively young age and their increased complexity, they are not as well analyzed as their non-history dependent counterparts.  It is the recycled coin model that is the main topic of interest in this paper.

The recycled coin model utilizes $N$ coin spaces (we will restrict our attention to $N=2$ in this paper) and a memory parameter which is a continuous variable, $\phi \in [0,8)$, that dictates how the coin flip operator, which acts on both coin spaces, should behave.  Despite this multitude of possible settings, in this paper we observe that only a few memory parameter values seem to actually induce different limiting distributions; all others appear to induce the uniform distribution.  We will observe that the initial state, the cycle size modulo four, and whether the cycle size is divisible by $12$ or not, all affect the limiting distribution of the walk.  Also, we will prove certain memory parameters produce equivalent limiting distributions depending on the initial state.  Finally, we will prove an interesting connection between the recycled coin model and the memory model of Ref \cite{QW-memory-cycle}.  We will prove some of our observations, and verify others numerically, leaving their proofs as very interesting open questions.

\section{QUANTUM WALKS}

A discrete time quantum walk on the cycle of size $d \in \mathbb{N}$, (see Ref \cite{QW-cycle-first}), operates over the Hilbert space $\mathcal{H}_P \otimes \mathcal{H}_C$, where $\mathcal{H}_P$ is the position space, spanned by orthonormal basis $\{\ket{i} \text{ } | \text{ } i = 0, 1, \cdots, d-1\}$ and $\mathcal{H}_C$ is the coin space spanned by orthonormal basis $\{\ket{\MoveLeft}, \ket{\MoveRight}\}$.  At each time step of the walk, the walker, which begins in some initial state $\ket{\psi_0} \in \mathcal{H}_P \otimes \mathcal{H}_C$ (for example $\ket{\psi_0} = \ket{0,\MoveLeft}$), undergoes the unitary evolution $U = S \cdot (I_P \otimes C)$.  Here, $C$ is referred to as the \emph{coin flip operator} and it acts only on $\mathcal{H}_C$ (for example $C$ may be the Hadamard operator); $I_P$ is the identity operator on $\mathcal{H}_P$; and $S$ is the \emph{shift operator} which updates the walker's position based on the state of its coin space:
\begin{equation}\label{eq:shift}
S = \sum_{n=0}^{d-1} \ket{n+1 \text{ mod } d}\bra{n} \otimes \ket{\MoveRight}\bra{\MoveRight} + \sum_{n=0}^{d-1} \ket{n-1 \text{ mod } d}\bra{n} \otimes \ket{\MoveLeft}\bra{\MoveLeft}.
\end{equation}
That is, $S$ maps basis states $\ket{n,\MoveLeft} \mapsto \ket{n-1,\MoveLeft}$ and $\ket{n, \MoveRight} \mapsto \ket{n+1, \MoveRight}$ with all arithmetic done modulo the cycle size $d$.  After $t$ time steps, the walker is in the state $U^t\ket{\psi_0}$.

\subsection{With Memory}

In \cite{QW-memory-cycle}, an extension of this model, based on the memory model of \cite{QW-memory1d} was proposed: the discrete time quantum walk with memory on the cycle.  Now, the walker lives in the Hilbert space $\mathcal{H}_P \otimes \mathcal{H}_M \otimes \mathcal{H}_C$; exactly the same space as the standard model above, but with the additional memory ancilla $\mathcal{H}_M$.  This new space is spanned by the orthonormal basis $\{\ket{\MoveLeft}, \ket{\MoveRight}\}$ and it is used to ``keep track'' of the previous position of the particle one time step in the past.

In greater detail, this walk evolves at each time step via the unitary operator $U_{\mathcal{M}} = S_{\mathcal{M}} \cdot (I_P \otimes I_M \otimes C)$ where $C$ is the coin flip operator acting only on $\mathcal{H}_C$ (as before); $I_P$ and $I_M$ are the identity operators on $\mathcal{H}_P$ and $\mathcal{H}_M$ respectively; and $S_{\mathcal{M}}$ is the shift operator which acts on basis states as follows:
\[
\begin{array}{c|l}
\ket{n,\MoveLeft,\MoveLeft} \mapsto \ket{n-1,\MoveLeft,\MoveLeft} & \ket{n,\MoveLeft, \MoveRight} \mapsto \ket{n+1, \MoveRight,\MoveRight}\\
\ket{n,\MoveRight,\MoveLeft} \mapsto \ket{n+1, \MoveRight, \MoveLeft} & \ket{n, \MoveRight, \MoveRight} \mapsto \ket{n-1, \MoveLeft, \MoveRight}.
\end{array}
\]

We will return to this model later.

\subsection{With Recycled Coins}

We now review the recycled coin model described in \cite{QW-recy-coin} (an extension of \cite{QW-history,QW-many-coins}), which is our primary interest in this paper.  This system operates over the Hilbert space:

\begin{equation}
\mathcal{H} = \mathcal{H}_P \otimes \mathcal{H}_{C_1} \otimes \cdots \otimes \mathcal{H}_{C_N},
\end{equation}
where $N \ge 1$, $\mathcal{H}_{C_i}$ is the $i$'th coin space, and $\mathcal{H}_P$ is the position space.  In this paper, we are interested in studying the behavior of this walk on the cycle of size $d$.  We also restrict our attention to the case when there are only two coin spaces (i.e., $N=2$); note that even in this ``restricted'' scenario, little is known of this walk's behavior and, to our knowledge, it has not been studied on the cycle.  Also, by limiting our investigation to $N=2$, we are able to compare this walk with results recently found in \cite{QW-memory-cycle}, which discussed the behavior of the quantum walk with memory on the cycle, discussed last section.

In this paper, each $\mathcal{H}_{C_i}$ ($i=1,2$) is spanned by the orthonormal basis $\{\ket{\MoveLeft}_i, \ket{\MoveRight}_i\}$ (when the context is clear, we will forgo writing the subscript ``$i$''), while the position space $\mathcal{H}_P$ is spanned by the orthonormal basis $\{\ket{i} \text{ } | \text{ } i = 0, 1, \cdots, d-1\}$.

Initially, the walker begins in some state $\ket{\psi_0} \in \mathcal{H}$, for instance $\ket{\psi_0} = \ket{0}\ket{\MoveLeft, \MoveLeft}$.  At each time step, a coin flip operator is applied which affects the state of the \emph{active coin}: $\mathcal{H}_{C_2}$ (the ``active coin,'' which is always $\mathcal{H}_{C_N}$, is the coin which determines the behavior of the shift operator described next) but which acts on both $\mathcal{H}_{C_1}\otimes\mathcal{H}_{C_2}$ (this distinguishes this model from the multi-coin walk described in Ref \cite{QW-many-coins}).  Let:
\begin{equation}
C(\theta)  = \left(\begin{array}{cc}
\cos\theta & \sin\theta\\
\sin\theta & -\cos\theta
\end{array}
\right),
\end{equation}
be the operator which maps $\ket{\MoveLeft} \mapsto \cos\theta\ket{\MoveLeft} + \sin\theta\ket{\MoveRight}$ and $\ket{\MoveRight} \mapsto \sin\theta\ket{\MoveLeft} - \cos\theta\ket{\MoveRight}$ (when $\theta = \pi/4$ this is the Hadamard operator).  We define the coin flip operator to be:
\begin{equation}\label{eq:coin-flip}
\widehat{C} = \ket{\MoveLeft}_1\bra{\MoveLeft}_1 \otimes C\left(\frac{\pi}{4}\right) + \ket{\MoveRight}_1\bra{\MoveRight}_1 \otimes C\left(\frac{\pi}{4}(1+\phi)\right).
\end{equation}

This coin is a slightly modified version of a more balanced one described in \cite{QW-recy-coin}; we call it an \emph{unbalanced coin} and it leads to some very interesting behavior.  Note that, as mentioned in \cite{QW-recy-coin}, there are many choices of coin flip operator and memory function.  Here $\phi$ is the \emph{memory parameter}; observe that, if $\phi = 0$, then the state of $\mathcal{H}_{C_1}$ has no influence on $\widehat{C}$'s action on $\mathcal{H}_{C_2}$.  Indeed, in this case, the recycled coin model becomes equivalent to the multi-coin walk described in \cite{QW-many-coins}.  For $\phi \ne 0$, however, the action on $\mathcal{H}_{C_2}$ will depend on the state of $\mathcal{H}_{C_1}$.  Due to the periodicity of $C(\theta)$, we need only consider $\phi \in [0,8)$.

Following the coin flip operator, a shift operator is applied.  This operator behaves exactly as in the standard quantum walk model (see Equation \ref{eq:shift}); however the action on $\mathcal{H}_P$ depends only on the state of the $\mathcal{H}_{C_2}$ system; the state of $\mathcal{H}_{C_1}$ does not matter.  That is to say, $\mathcal{H}_{C_2}$, being the active coin, determines the direction the particle moves, the state of $\mathcal{H}_{C_1}$ is irrelevant at this point.  This shift operator will map basis states $\ket{n, c_1, \MoveLeft}$ to $\ket{n-1, c_1, \MoveLeft}$ and $\ket{n, c_1, \MoveRight}$ to $\ket{n+1, c_1, \MoveRight}$ where all arithmetic is done modulo $d$ and $c_1 \in \{\MoveLeft, \MoveRight\}$.

Finally, a \emph{memory update} operator is applied which simply swaps the active coin for the inactive one.  That is:
\begin{equation}\label{eq:memory-update}
M = \ket{\MoveLeft \MoveLeft}\bra{\MoveLeft \MoveLeft} + \ket{\MoveRight \MoveLeft}\bra{\MoveLeft \MoveRight} + \ket{\MoveLeft \MoveRight}\bra{\MoveRight\MoveLeft} + \ket{\MoveRight \MoveRight}\bra{\MoveRight \MoveRight}.
\end{equation}

With these operators defined, the walk evolves over time via the unitary operator:
\begin{equation}\label{eq:evolve}
U = (I_P\otimes M) \cdot S \cdot (I_P \otimes \widehat{C}),
\end{equation}
where $I_P$ is the identity operator on $\mathcal{H}_P$.  If the initial state is $\ket{\psi_0}$, then, after $t$ steps of the walk, the state evolves to $\ket{\psi_t} = U^t\ket{\psi_0}$.

\section{ANALYSIS ON THE CYCLE}

We are interested in two quantities.  First, the probability that, after $t$ steps, the walker lands on position $\ket{n} \in \mathcal{H}_P$, provided a measurement of $\mathcal{H}_P$ were performed.  This value is:

\begin{equation}\label{eq:pr}
p(n,t, \phi ; \psi_0) = \sum_{c_1,c_2} \left|\braket{n, c_1, c_2 | U^t|\psi_0}\right|^2,
\end{equation}
where the sum is over $c_1,c_2 \in \{\MoveLeft, \MoveRight\}$, and $\phi$ is the memory parameter (on which $U$ depends).  Due to the nature of the quantum walk, this function does not converge as $t \rightarrow \infty$ \cite{QW-cycle-first}; thus, we consider also the \emph{time averaged distribution} defined as:

\begin{equation}\label{eq:time-avg}
\bar{p}(n, \phi ; \psi_0) = \lim_{T \rightarrow \infty} \frac{1}{T} \sum_{t=1}^T p(n,t,\phi; \psi_0),
\end{equation}
which does converge \cite{QW-cycle-first}.  This distribution is also called the \emph{limiting distribution}.

Note that, when the initial state is clear, or is irrelevant, we will often forgo writing the ``$;\psi_0$'' portion of the above functions.

We now investigate the recycled coin model on the $d$-cycle.  In particular, we are interested in understanding the behavior of $\bar{p}(n, \phi)$ for various memory parameters.  To do so, we will use Fourier analysis.  This is a technique commonly used to analyze quantum walks \cite{QW-survey,QW-Fourier}; in particular, we will, at first, primarily be following the example of Ref \cite{QW-memory-cycle}, which used this method to analyze the properties of the quantum walk with memory on the cycle.

\subsection{Fourier Analysis}

At time $t$, we may write the state of the walk $\ket{\psi_t}$ as:
\[
\ket{\psi_t} = \sum_{n=0}^{d-1}\ket{n}\otimes \left[ \sum_{c_1,c_2}\psi_{c_1,c_2}(n,t)\ket{c_1,c_2} \right] = \sum_{n=0}^{d-1}\ket{n}\otimes\psi(n,t),
\]
where the sum is over all $c_1,c_2 \in \{\MoveLeft,\MoveRight\}$, $\psi_{c_1,c_2}(n,t)$ is the probability amplitude corresponding to the basis ket $\ket{n,c_1,c_2}$ at time $t$, and:
\[
\psi(n,t) = \left(\begin{array}{c}
\psi_{\MoveLeft\MoveLeft}(n,t)\\
\psi_{\MoveLeft\MoveRight}(n,t)\\
\psi_{\MoveRight\MoveLeft}(n,t)\\
\psi_{\MoveRight\MoveRight}(n,t)
\end{array}
\right).
\]

It is clear that $p(n,t,\phi) = tr\left(\psi^*(n,t)\psi(n,t)\right)$ (where $\psi^*(n,t)$ represents the conjugate transpose of $\psi(n,t)$, $tr(\cdot)$ is the trace operation, and $p(n,t)$ is from Equation \ref{eq:pr}).

We first compute $\psi(n, t+1)$.  Consider the effect of applying the coin flip operator $(I_P\otimes\widehat{C})$ to the state $\ket{n}\otimes\psi(n,t)$:

\[
(I_P \otimes \widehat{C})\ket{n}\otimes\psi(n,t) = \ket{n}\otimes\left(\begin{array}{c}
\frac{1}{\sqrt{2}}\psi_{\MoveLeft\MoveLeft}(n,t) + \frac{1}{\sqrt{2}}\psi_{\MoveLeft\MoveRight}(n,t)\\
\frac{1}{\sqrt{2}}\psi_{\MoveLeft\MoveLeft}(n,t) - \frac{1}{\sqrt{2}}\psi_{\MoveLeft\MoveRight}(n,t)\\
\cos\theta\psi_{\MoveRight\MoveLeft}(n,t) + \sin\theta\psi_{\MoveRight\MoveRight}(n,t)\\
\sin\theta\psi_{\MoveRight\MoveLeft}(n,t) - \cos\theta\psi_{\MoveRight\MoveRight}(n,t)
\end{array}\right),
\]
where $\theta = \frac{\pi}{4}(1+\phi)$.

Of course, the probability amplitudes at position $\ket{n}$, after applying the shift operator $S$, are determined only by the amplitudes at positions $\ket{n-1}$ and $\ket{n+1}$ (addition modulo $d$) before the shift operator was applied.  From the above equation, and recalling that the shift operator's action is determined only by the active coin - the right most in our case - it is evident that this value is:

\begin{align*}
&S\cdot(I_P\otimes\widehat{C})\left(\sum_{n=0}^{d-1}\ket{n}\otimes\psi(n,t)\right)\\
&= \ket{n}\otimes\left(\begin{array}{c}
\frac{1}{\sqrt{2}}\psi_{\MoveLeft\MoveLeft}(n+1,t) + \frac{1}{\sqrt{2}}\psi_{\MoveLeft\MoveRight}(n+1,t)\\
\frac{1}{\sqrt{2}}\psi_{\MoveLeft\MoveLeft}(n-1,t) - \frac{1}{\sqrt{2}}\psi_{\MoveLeft\MoveRight}(n-1,t)\\
\cos\theta\psi_{\MoveRight\MoveLeft}(n+1,t) + \sin\theta\psi_{\MoveRight\MoveRight}(n+1,t)\\
\sin\theta\psi_{\MoveRight\MoveLeft}(n-1,t) - \cos\theta\psi_{\MoveRight\MoveRight}(n-1,t)
\end{array}\right) + \cdots
\end{align*}
where above, we have written only the relevant $\ket{n}$ potion of the resulting state.

Finally, the memory update operator $M$ is applied yielding:
\begin{align*}
&(I_P\otimes M)\cdot S\cdot(I_P\otimes\widehat{C})\left(\sum_{n=0}^{d-1}\ket{n}\otimes\psi(n,t)\right)\\
&= \ket{n}\otimes\left(\begin{array}{c}
\frac{1}{\sqrt{2}}\psi_{\MoveLeft\MoveLeft}(n+1,t) + \frac{1}{\sqrt{2}}\psi_{\MoveLeft\MoveRight}(n+1,t)\\
\cos\theta\psi_{\MoveRight\MoveLeft}(n+1,t) + \sin\theta\psi_{\MoveRight\MoveRight}(n+1,t)\\
\frac{1}{\sqrt{2}}\psi_{\MoveLeft\MoveLeft}(n-1,t) - \frac{1}{\sqrt{2}}\psi_{\MoveLeft\MoveRight}(n-1,t)\\
\sin\theta\psi_{\MoveRight\MoveLeft}(n-1,t) - \cos\theta\psi_{\MoveRight\MoveRight}(n-1,t)
\end{array}\right) + \cdots\\
&= \sum_{n=0}^{d-1}\ket{n}\otimes\psi(n,t+1).
\end{align*}

From this, we see that we may write:
\[
\psi(n,t+1) = M_+(\theta)\psi(n+1,t) + M_-(\theta)\psi(n-1,t),
\]
where:
\begin{align*}
M_+(\theta) &= \left(\begin{array}{cccc}
\frac{1}{\sqrt{2}} & \frac{1}{\sqrt{2}} &0&0\\
0&0&\cos\theta & \sin\theta\\
0&0&0&0\\
0&0&0&0
\end{array}\right)\\\\
M_-(\theta) &= \left(\begin{array}{cccc}
0&0&0&0\\
0&0&0&0\\
\frac{1}{\sqrt{2}} & -\frac{1}{\sqrt{2}} &0&0\\
0&0&\sin\theta & -\cos\theta
\end{array}\right)\\\\
\end{align*}

Now, consider the Fourier transform of the function $\psi(n,t)$:
\[
\widetilde{\psi}(k,t) = \sum_{n=0}^{d-1}\psi(n,t)e^{-2\pi ikn/d},
\]
from which we have:
\begin{align*}
\widetilde{\psi}(k,t+1) &= \sum_{n=0}^{d-1}\psi(n,t+1)e^{-2\pi ikn/d}\\
&= \sum_{n=0}^{d-1}\left(M_+(\theta)\psi(n+1,t) + M_-(\theta)\psi(n-1,t)\right)e^{-2\pi ikn/d}\\
&=\left(e^{2\pi ik/d}M_+(\theta) + e^{-2\pi ik/d}M_-(\theta)\right)\sum_{n=0}^{d-1}\psi(n,t) e^{-2\pi ikn/d}\\
&=M_k(\theta)\widetilde{\psi}(k,t),
\end{align*}
where:
\begin{equation}\label{eq:Mk}
M_k(\theta) = \left(\begin{array}{cccc}
\frac{e^{2\pi ik/d}}{\sqrt{2}} & \frac{e^{2\pi ik/d}}{\sqrt{2}} & 0 & 0\\
0 & 0 & e^{2\pi ik/d}\cos\theta & e^{2\pi ik/d}\sin\theta\\
\frac{e^{-2\pi ik/d}}{\sqrt{2}} & -\frac{e^{-2\pi ik/d}}{\sqrt{2}} & 0 & 0\\
0&0&e^{-2\pi ik/d}\sin\theta & -e^{-2\pi ik/d}\cos\theta
\end{array}\right),
\end{equation}
and, so, we have:
\begin{equation}\label{eq:position-Fourier}
\widetilde{\psi}(k,t) = [M_k(\theta)]^t\widetilde{\psi}(k,0).
\end{equation}

Let $\{\ket{\phi_j(k,\theta)}\}_{j=1}^4$ be the (orthonormal) eigenvectors of $M_k(\theta)$ and $\{\lambda_j(k,\theta)\}_{j=1}^4$ be the corresponding eigenvalues.  Then we may write:
\[
M_k(\theta) = \sum_{j=1}^4 \lambda_j(k,\theta)\ket{\phi_j(k,\theta)}\bra{\phi_j(k,\theta)},
\]
and thus:
\begin{equation}\label{eq:Mk-action}
[M_k(\theta)]^t = \sum_{j=1}^4 \lambda^t_j(k,\theta)\ket{\phi_j(k,\theta)}\bra{\phi_j(k,\theta)}.
\end{equation}

Assuming our initial state is of the form $\ket{\psi_0} = \ket{0}\otimes\psi(0,0)$, that is the particle is located only at position $0$ (thus $\psi(n,0) \equiv 0$ for all $n \ne 0$), then we have:
\begin{equation}\label{eq:start-Fourier}
\widetilde{\psi}(k,0) = \sum_{n=0}^{d-1}\psi(n,0)e^{-2\pi ikn/d} = \psi(0,0), \forall k.
\end{equation}

Changing basis, we may write:
\begin{equation}
\widetilde{\psi}(k,0) = \sum_{j=1}^4 \alpha_j(k,\theta)\ket{\phi_j(k,\theta)},
\end{equation}
where:
\begin{equation}\label{eq:alpha}
\alpha_j(k,\theta) = \braket{\phi_j(k,\theta) | \widetilde{\psi}(k,0)} = \braket{\phi_j(k,\theta) | \psi(0,0)}.
\end{equation}

Combining this, with Equations \ref{eq:position-Fourier} and \ref{eq:Mk-action}, yields:
\begin{align*}
\widetilde{\psi}(k,t) &= [M_k(\theta)]^t\sum_{j=1}^4\alpha_j(k,\theta) \ket{\phi_j(k,\theta)}\\
&=\sum_{j=1}^4\alpha_j(k,\theta)\lambda_j^t(k,\theta) \ket{\phi_j(k,\theta)}.
\end{align*}

Inverting the Fourier transform, we find:
\begin{align*}
\psi(n,t) &= \frac{1}{d}\sum_{k=0}^{d-1}\widetilde{\psi}(k,t)e^{2\pi ikn/d}\\
&= \frac{1}{d}\sum_{k=0}^{d-1}e^{2\pi ikn/d} \sum_{j=1}^4\alpha_j(k,\theta)\lambda_j^t(k,\theta) \ket{\phi_j(k,\theta)}.
\end{align*}

Finally, the probability of measuring the particle in position $\ket{n}$ after $t$ time steps (Equation \ref{eq:pr}) is:

\begin{align}
p(n,t,\phi) = \psi^*(n,t)\psi(n,t) = &\frac{1}{d^2}\sum_{k,m=0}^{d-1}\sum_{j,l=1}^4 e^{2\pi i n (m-k)/d} \alpha^*_j(k,\theta)\alpha_l(m,\theta)\notag\\
&\times\left( \lambda_j^*(k, \theta) \lambda_l(m, \theta) \right)^t \braket{\phi_j(k,\theta) | \phi_l(m, \theta)}.\label{eq:pr-closed}
\end{align}

From this equation, we may compute the time averaged distribution (Equation \ref{eq:time-avg}) which, after some algebra and properties of limits with finite sums, we find:

\begin{equation}\label{eq:time-avg-closed-first}
\bar{p}(n,\phi) = \sum_{k,m=0}^{d-1}\sum_{j,l=1}^4 f(n,k,m,j,l,\theta)\cdot g(k,m,j,l,\theta),
\end{equation}
where
\begin{align}
f(n,k,m,j,l,\theta) &= \frac{1}{d^2}\alpha_j^*(k,\theta) \alpha_l(m,\theta) \braket{\phi_j(k,\theta)|\phi_l(m,\theta)} e^{2\pi in(m-k)/d}\\
g(k,m,j,l,\theta) &= \lim_{T\rightarrow \infty} \frac{1}{T} \sum_{t=1}^T \left(\lambda_j^*(k,\theta)\lambda_l(m,\theta)\right)^t.
\end{align}

Using similar arguments as in Ref \cite{QW-cycle-first} (in particular, see the proof of Theorem 3.4 from that source), we know that:
\[
g(k,m,j,l,\theta) = \left\{\begin{array}{ll}
1 & \text{if } \lambda_j(k,\theta) = \lambda_l(m,\theta)\\
0 & \text{otherwise}
\end{array}\right.,
\]
and therefore, Equation \ref{eq:time-avg-closed-first} becomes simply:
\begin{equation}\label{eq:time-avg-closed}
\bar{p}(n) = \sum_{k,m,j,l} f(n,k,m,j,l,\theta),
\end{equation}
where the sum is over all $k,m,j,l$ where $\lambda_j(k,\theta) = \lambda_l(m,\theta)$.

\subsection{Time Averaged Distribution}

We now examine how the memory parameter $\phi$ affects Equation \ref{eq:time-avg-closed} for various cycle sizes $d$.  This amounts to finding the eigensystem of $M_k(\theta)$ (Equation \ref{eq:Mk}) for $k=0,1,\cdots,d-1$, where $\theta = \frac{\pi}{4}(1+\phi)$.  Unfortunately, determining expressions for the eigensystem, for arbitrary $\theta$, proved difficult.  Even in the ``simple'' case of $\phi = 0$, the eigensystem is too complicated to admit any simplification (as far as we could determine).  A similar problem was encountered by the authors of \cite{QW-memory-cycle} (even though there, they had no memory parameter to contend with).  However, following their example, we can easily compute, numerically, the eigensystem of $M_k(\theta)$ and, thus, evaluate $\bar{p}(n,\phi)$.

In this section, we are interested in the following initial states:
\begin{align}
\psi_a &= (1,0,0,0)^T &&= \ket{\MoveLeft\MoveLeft}\\
\psi_b &= \frac{1}{\sqrt{2}}(1,1,0,0)^T &&= \frac{1}{\sqrt{2}}(\ket{\MoveLeft\MoveLeft} + \ket{\MoveLeft\MoveRight})\\
\psi_c &= \frac{1}{2} (1,1,1,1)^T &&= \frac{1}{2}(\ket{\MoveLeft\MoveLeft} + \ket{\MoveLeft\MoveRight} + \ket{\MoveRight\MoveLeft} + \ket{\MoveRight\MoveRight})\\
\psi_d &= \frac{1}{2} (1,1,1,-1)^T &&= \frac{1}{2}(\ket{\MoveLeft\MoveLeft} + \ket{\MoveLeft\MoveRight} + \ket{\MoveRight\MoveLeft} - \ket{\MoveRight\MoveRight})
\end{align}
These are the values we set $\psi(0,0)$ to when computing $\alpha_j(k,\theta)$ (Equation \ref{eq:alpha}) - e.g., $\psi(0,0) = \psi_b$.  We will clearly state which initial state we use for which graphs in the following.

\begin{figure}
  \centering
  \subfloat[$d = 42$; Initial State $\psi_a$]{\includegraphics[width=150pt]{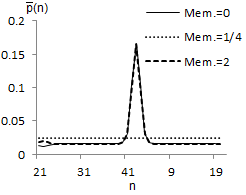}} \qquad
  \subfloat[$d = 11$; Initial State $\psi_b$]{\includegraphics[width=170pt]{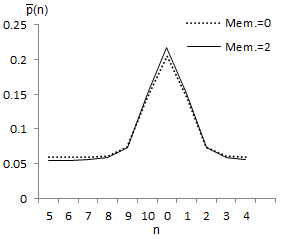}} \qquad\newline\newline
  
  \subfloat[$d = 110$; Initial State $\psi_b$]{\includegraphics[width=150pt]{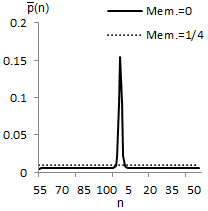}} \qquad
  \subfloat[$d = 11$; Initial State $\psi_c$]{\includegraphics[width=170pt]{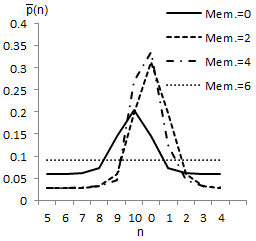}} \qquad
\caption{Graph of the limiting distribution $\bar{p}(n, \phi; \psi_0)$ for various memory parameters $\phi$ (denoted ``Mem.'' in the above figure keys), cycle sizes, and initial states.  In all Figures except (d), setting $\phi = 0$ produced the same distribution as $\phi = 6$; likewise the distribution with $\phi = 2$ was equal to that produced when $\phi = 4$ and so they were not plotted.  In Figure (c) the difference between the distributions for memory parameter $\phi = 0$ and $\phi = 2$ was not noticeable on the graph (though they were different) and so were not plotted; also, this shape was typical for all $d$ we tested with these memory parameters.  Figure (d) produced different distributions for these memory parameters.  All memory parameters not mentioned in either the plot, or the above description, produced the uniform distribution.}\label{fig:init-graphs}
\end{figure}

Our initial results, for various memory parameters, initial states, and cycle sizes, are shown in Figure \ref{fig:init-graphs} and there are many interesting observations to make.  First, there is a relationship between various memory parameters, depending on the initial state:

\begin{theorem}\label{thm:first}
Let $Q$ be the matrix:
\[
Q = \left(\begin{array}{cccc}
1&0&0&0\\
0&1&0&0\\
0&0&1&0\\
0&0&0&-1
\end{array}\right).
\]
For all $d \ge 1$, $t \ge 0$, and initial states $\ket{\psi_0} = \ket{0}_P \otimes \psi(0,0)$, let $\phi' = -(2+\phi)$, then:
\[
p(n,t,\phi; \psi(0,0)) = p(n,t, \phi' ; Q\cdot\psi(0,0)).
\]
\end{theorem}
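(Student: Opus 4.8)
The plan is to reduce the whole statement to a single algebraic identity relating the Fourier-space propagator $M_k$ at the two parameter values by conjugation with $Q$. First I would translate the change of memory parameter into a change of the angle $\theta = \frac{\pi}{4}(1+\phi)$: the substitution $\phi' = -(2+\phi)$ gives $\theta' = \frac{\pi}{4}(1+\phi') = -\frac{\pi}{4}(1+\phi) = -\theta$. So the theorem is really a comparison between the walk run at angle $\theta$ and the walk run at angle $-\theta$ with the initial amplitude vector replaced by $Q\,\psi(0,0)$, and everything is governed by how $M_k(-\theta)$ relates to $M_k(\theta)$.

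The key step is the claim $Q\,M_k(\theta)\,Q = M_k(-\theta)$ for every $k$ and every $\theta$. Since $Q = \operatorname{diag}(1,1,1,-1)$ has $Q^2 = I$, conjugation acts entrywise as $(Q M_k Q)_{ij} = (Q)_{ii}(M_k)_{ij}(Q)_{jj}$, so it flips the sign of exactly those entries lying in the fourth row or the fourth column but not the $(4,4)$ entry. Inspecting Equation \ref{eq:Mk}, the only entries so affected are the $(2,4)$ and $(4,3)$ entries, which are precisely the two entries carrying a factor of $\sin\theta$; the $(2,3)$ and $(4,4)$ entries, carrying $\cos\theta$, and all the $\frac{1}{\sqrt{2}}$ entries are untouched. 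On the other side, replacing $\theta$ by $-\theta$ uses $\cos(-\theta)=\cos\theta$ and $\sin(-\theta)=-\sin\theta$, which flips the sign of exactly the same two sine entries and leaves everything else fixed. Comparing the two matrices entrywise then closes the identity. I expect this verification to be the crux of the argument: the entire theorem hinges on the fact that the positions of the $\sin\theta$ terms in $M_k$ coincide exactly with the entries toggled by conjugation with $Q$, which is why this particular $Q$ and this particular shift $\phi\mapsto-(2+\phi)$ are the ones that work.

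Given the identity, the rest is bookkeeping. Using $Q^2=I$ I would raise it to the $t$-th power to get $[M_k(-\theta)]^t = Q\,[M_k(\theta)]^t\,Q$. By Equation \ref{eq:start-Fourier}, both walks have $k$-independent initial Fourier data equal to their respective starting amplitude vectors, so by Equation \ref{eq:position-Fourier} the transformed walk (angle $\theta'=-\theta$, start $Q\psi(0,0)$) has Fourier state $[M_k(-\theta)]^t\,Q\psi(0,0) = Q\,[M_k(\theta)]^t\,Q\,Q\,\psi(0,0) = Q\,[M_k(\theta)]^t\,\psi(0,0)$, i.e. $Q$ applied to the Fourier state of the original walk. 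Because the inverse Fourier transform is linear and acts on the position index only, this relation passes unchanged to position space, giving $\psi'(n,t) = Q\,\psi(n,t)$ for every $n$, where $\psi'$ denotes the amplitude vector of the transformed walk.

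Finally, since $Q$ is real and orthogonal ($Q^\ast Q = Q^2 = I$) it preserves the Euclidean norm of each amplitude vector, so
\[
p(n,t,\phi';Q\psi(0,0)) = \psi'^\ast(n,t)\psi'(n,t) = \psi^\ast(n,t)\,Q^\ast Q\,\psi(n,t) = \psi^\ast(n,t)\psi(n,t) = p(n,t,\phi;\psi(0,0)),
\]
which is the assertion. As a consistency check I would confirm agreement with the reported numerics: for $\psi_a=(1,0,0,0)^T$ we have $Q\psi_a=\psi_a$, so $\phi=0$ and $\phi'=-2\equiv 6 \pmod 8$ must yield identical distributions (and likewise $\phi=2$ with $\phi'=-4\equiv 4$), exactly as stated for Figure \ref{fig:init-graphs}; whereas for $\psi_d$ one has $Q\psi_d=\psi_c\neq\psi_d$, explaining why the memory parameters need not match there.
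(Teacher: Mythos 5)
Your proof is correct, and it takes a genuinely more direct route than the paper's. The paper proves the eigenvector form of your symmetry: it shows that if $M_k(\theta)\ket{\phi} = \lambda\ket{\phi}$ then $M_k(-\theta)\,Q\ket{\phi} = \lambda\, Q\ket{\phi}$, then verifies that the eigenvector overlaps $\braket{\phi_j(k,\theta)|\phi_l(m,\theta)}$ and the coefficients $\alpha_j(k,\theta)$ are all unchanged under $Q$, and finally concludes by matching every term of the closed-form spectral expression for $p(n,t,\phi)$ (Equation \ref{eq:pr-closed}). You instead establish the operator identity $Q\,M_k(\theta)\,Q = M_k(-\theta)$ (of which the paper's eigenvector claim is an immediate corollary, since $M_k(-\theta)Q\ket{\phi} = QM_k(\theta)\ket{\phi}$), raise it to the $t$-th power, and track the state itself through the Fourier inversion, obtaining the stronger intermediate conclusion $\psi'(n,t) = Q\,\psi(n,t)$ for every $n$ and $t$ before taking norms. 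This buys you two things: you never need the spectral decomposition of $M_k(\theta)$ at all (no appeal to existence or orthonormality of an eigenbasis, which the paper's route quietly relies on), and your argument would survive essentially verbatim even in position space, since the same conjugation relation holds for $M_\pm(\theta)$. What the paper's packaging buys in exchange is that the eigen-system correspondence is exactly the form needed to feed into the limiting-distribution formula (Equation \ref{eq:time-avg-closed}) and the subsequent numerical analysis, so the authors get the corollary $\bar{p}(n,\phi;\psi(0,0)) = \bar{p}(n,\phi';Q\psi(0,0))$ in the same currency they compute with later. Your closing consistency check ($Q\psi_a = \psi_a$, $Q\psi_d = \psi_c$) correctly reproduces the identities the paper lists after the theorem.
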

\begin{proof}
Recall that $\theta = (1+\phi)\frac{\pi}{4}$.  This implies that $\theta' = (1+\phi')\frac{\pi}{4} = -\theta$.

Let $x = \exp(2\pi ik/d)$ and $y = \exp(-2\pi ik/d)$, then we may write $M_k(\theta)$ and $M_k(-\theta)$ (Equation \ref{eq:Mk}) as:

\begin{align*}
M_k(\theta) &= \left(\begin{array}{cccc}
\frac{x}{\sqrt{2}}&\frac{x}{\sqrt{2}}&0&0\\
0&0&x\cos\theta&x\sin\theta\\
\frac{y}{\sqrt{2}}&-\frac{y}{\sqrt{2}}&0&0\\
0&0&y\sin\theta&-y\cos\theta
\end{array}\right)
\end{align*}
\begin{align*}
M_k(-\theta) &= \left(\begin{array}{cccc}
\frac{x}{\sqrt{2}}&\frac{x}{\sqrt{2}}&0&0\\
0&0&x\cos\theta&-x\sin\theta\\
\frac{y}{\sqrt{2}}&-\frac{y}{\sqrt{2}}&0&0\\
0&0&-y\sin\theta&-y\cos\theta
\end{array}\right)
\end{align*}

Let $\lambda$ be an eigenvalue of $M_k(\theta)$ with corresponding eigenvector $\ket{\phi} = (\alpha, \beta, \gamma, \delta)^T$.  Thus:
\begin{equation}\label{eq:ev1-mk0}
M_k(\theta)\ket{\phi} = \left(\begin{array}{c}
\frac{x}{\sqrt{2}}(\alpha+\beta)\\
x(\gamma\cos\theta+\delta\sin\theta)\\
\frac{y}{\sqrt{2}}(\alpha-\beta)\\
y(\gamma\sin\theta-\delta\cos\theta)
\end{array}\right) = \lambda \left(\begin{array}{c}
\alpha\\
\beta\\
\gamma\\
\delta
\end{array}\right).
\end{equation}

We claim $\lambda$ is also an eigenvalue of $M_k(-\theta)$ with corresponding eigenvector $\ket{\phi'} = Q\ket{\phi} = (\alpha, \beta, \gamma, -\delta)^T$.  Indeed:
\begin{align*}
M_k(-\theta)\ket{\phi'} &= \left(\begin{array}{c}
\frac{x}{\sqrt{2}}(\alpha+\beta)\\
x(\gamma\cos\theta+\delta\sin\theta)\\
\frac{y}{\sqrt{2}}(\alpha-\beta)\\
y(-\gamma\sin\theta+\delta\cos\theta)
\end{array}\right)\\\\&= \left(\begin{array}{c}
\frac{x}{\sqrt{2}}(\alpha+\beta)\\
x(\gamma\cos\theta+\delta\sin\theta)\\
\frac{y}{\sqrt{2}}(\alpha-\beta)\\
-y(\gamma\sin\theta-\delta\cos\theta)
\end{array}\right) = \lambda \left(\begin{array}{c}
\alpha\\
\beta\\
\gamma\\
-\delta
\end{array}\right),
\end{align*}
where the last equality follows from Equation \ref{eq:ev1-mk0}.

Thus, let $\ket{\phi_j(k,\theta')} = Q\ket{\phi_j(k,\theta)}$ and $\lambda_j(k,\theta') = \lambda_j(k,\theta)$ for all $k$ and $j$, where $\lambda_j(k,\theta)$ and $\ket{\phi_j(k,\theta)}$ are eigenvalues and (orthonormal) eigenvectors of $M_k(\theta)$.  It is clear that, for every $k,j,m,l$, it holds:
\begin{equation}\label{eq:thm1-eq1}
\braket{\phi_j(k,-\theta)|\phi_l(m,-\theta)} = \braket{\phi_j(k,\theta) | Q^*Q | \phi_l(m,\theta)} = \braket{\phi_j(k,\theta)|\phi_l(m,\theta)}.
\end{equation}
From the above analysis, these $\lambda_j(k,-\theta)$ and $\ket{\phi_j(k,-\theta)}$ are eigenvalues and (orthonormal) eigenvectors of $M_k(-\theta)$.

Let $\psi' = Q\psi(0,0)$ be the starting state for the walk with memory parameter $\phi'$.  Then, from Equation \ref{eq:alpha}, we have:
\begin{align*}
\alpha_j(k,-\theta) &= \braket{\phi_j(k,-\theta)|\psi'} = \braket{\phi_j(k,\theta)|Q^*Q\psi(0,0)}\\
&= \braket{\phi_j(k,\theta)|\psi(0,0)} = \alpha_j(k,\theta).
\end{align*}

From this, along with Equation \ref{eq:thm1-eq1}, we may conclude that $p(n, t, \phi; \psi(0,0))$ $=$ $p(n, t, \phi'; Q\psi(0,0))$.
\end{proof}

Of course the above theorem also implies, trivially, that the limiting distributions are also equal.  That is:
\[
\bar{p}(n, \phi; \psi(0,0)) = \bar{p}(n, \phi'; Q\cdot\psi(0,0)),
\]
where $\phi' = -(2+\phi)$.  Since we typically restrict the memory parameter to be in the interval $[0,8)$ (due to the periodicity of $M_k(\theta)$), we may take the value of $\phi'$ to be modulo 8.  In particular, we have the following identities:
\begin{enumerate}
  \item $\bar{p}(n, 0; \psi(0,0)) = \bar{p}(n, 6; Q\psi(0,0))$.
  \item $\bar{p}(n, 2; \psi(0,0)) = \bar{p}(n, 4; Q\psi(0,0))$.
  \item $\bar{p}(n, 1; \psi(0,0)) = \bar{p}(n, 5; Q\psi(0,0))$.
\end{enumerate}

Other memory parameters, which do not obey the relationship required by Theorem \ref{thm:first} (i.e., those $\phi' \ne -(2+\phi)$), do not necessarily induce equivalent distributions.  For instance, Figure \ref{fig:init-graphs} (b) shows that it is not necessarily the case that $\bar{p}(n, 0; \psi(0,0)) = \bar{p}(n,2; Q\psi(0,0))$.  They produce different distributions, but we can compute the statistical distance between these two distributions to determine how different they are.  That is, we are interested in the value:
\begin{equation}\label{eq:stat-dist-mem}
\frac{1}{2}\sum_{n=0}^{d-1} \left|\bar{p}(n,0; \psi(0,0)) - \bar{p}(n, 2; Q\psi(0,0))\right|,
\end{equation}
for various initial states and cycle sizes $d$.  This distance is shown in Figure \ref{fig:stat-dist}.  When the initial state used was $\psi_a$, we observe that the difference between the distributions induced by memory parameter $\phi = 0$ and that induced by $\phi = 2$ decreases as $d$ increases.  The behavior was also affected by the residue of the cycle size, modulo $4$.  If $d$ is divisible by $4$, there was no difference (Equation \ref{eq:stat-dist-mem} was zero); by Theorem \ref{thm:first}, this would imply memory parameters $\phi = 0,2,4,6$ produce identical distributions for cycles of this size.  If $d = 4r+2$ for some $r$, the difference was greatest.  Second greatest was $d = 4r+1$, though the distance was almost the same as when $d = 4r+3$.  See Figure \ref{fig:stat-dist} (a) and (b).

When using initial state $\psi_c$ and $\psi_d$ we get a very different result.  Here, as the cycle size $d$ increases, the difference between the two distributions does not decrease.  This was for all $d$ we tested (even $d$ divisible by four).  Furthermore, the residue of $d$ modulo $4$ played a less prominent roll in the statistical distance for these two initial states (different residues produced different results, however it was a very slight difference and so was not plotted in these graphs).  See Figure \ref{fig:stat-dist} (c) and (d).

These are all numerical observations: we have no proof of these leaving this as potentially very interesting future work.

\begin{figure}
  \centering
  \subfloat[Initial State $\psi_a$]{\includegraphics[width=160pt]{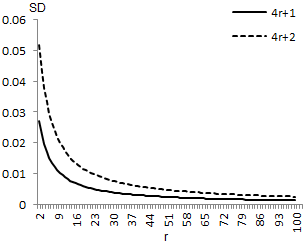}} \qquad
  \subfloat[Initial State $\psi_a$]{\includegraphics[width=160pt]{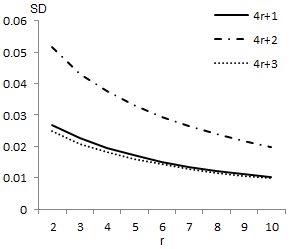}} \qquad\newline\newline
  
  \subfloat[Initial State $\psi_c$]{\includegraphics[width=150pt]{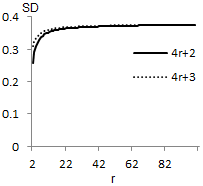}} \qquad
  \subfloat[Initial State $\psi_d$]{\includegraphics[width=170pt]{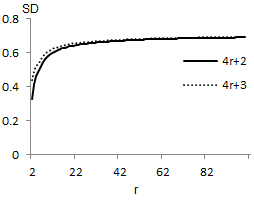}} \qquad
\caption{A graph of the statistical difference between $\bar{p}(n, 0)$ and $\bar{p}(n, 2)$ (see Equation \ref{eq:stat-dist-mem}) for various initial states and cycle sizes of the form $d=4r+j$ with $j \in \{0,1,2,3\}$.  Observe that this value differs based on the residue of the cycle size, modulo $4$, and the initial state used.  In (a) and (b), for $d = 4r$, the difference was zero (i.e., the two memory parameters produced the same distribution) and so was not plotted.  The difference between cycle sizes of the form $d = 4r+1$ and $4r+3$ is only noticeable for small $r$ - thus these are plotted in (b) but not (a).  However, for initial states $\psi_c$ and $\psi_d$, the difference between cycle sizes of the form $4r$ and $4r+2$ was not noticeable - similarly for odd cycle sizes - and so they are not plotted above (there was a difference, however).}\label{fig:stat-dist}
\end{figure}

Our next observation, and perhaps most interesting, is that for any memory parameter $\phi \ne 0, 2, 4,$ or $6$, and for any initial state $\psi(0,0)$, we observed that the time averaged distribution is uniform if the cycle size $d$ is not divisible by 12.  If the cycle size is divisible by $12$, then all memory parameters $\phi \ne 0, 1, 2, 4, 5$, or $6$ produced a uniform distribution (see Figure \ref{fig:graphs12} for the case when $d = 24$).  These observations we were unable to prove, however we did verify it for all cycle sizes $d \le 300$ and initial states $\psi(0,0) = \psi_a, \psi_b, \psi_c$, and $\psi_d$.  For this verification, we computed, for each of the four initial states, Equation \ref{eq:time-avg-closed} numerically for all $d \in \{2, 3, \cdots, 300\}$ and $\phi = m/10$ for $m \in \{0, 1, \cdots, 79\}$.

To ensure that we were not merely running into numerical approximation errors, we also simulated the walk for various $d$ and memory parameters, estimating $\bar{p}(n)$ for $T = 9\times 10^{8}$.  That is, we did not use Equation \ref{eq:time-avg-closed}, but instead applied $U$ (Equation \ref{eq:evolve}) sequentially, calculating $p(n,t, \phi)$ directly for each $t \le T$ and averaging these, thus providing an estimate of $\bar{p}(n)$.  This simulation confirmed our results.  We did notice, however, that the parameter did have an effect on the convergence time, or \emph{mixing time} \cite{QW-cycle-first}.  See Figure \ref{fig:mixingtime}.

\begin{figure}
  \centering
  \subfloat[$d = 24$; Initial State $\psi_a$]{\includegraphics[width=150pt]{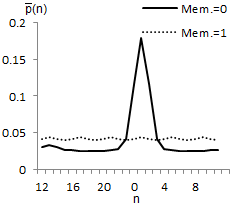}} \qquad
  \subfloat[$d = 24$; Initial State $\psi_c$]{\includegraphics[width=170pt]{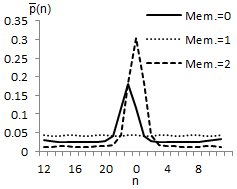}} \qquad\newline\newline
\caption{Time averaged distribution for $d=24$.  In (a), initial state $\psi_a$ was used - here memory parameters $\phi = 0,2,4$ and $6$ produced identical distributions; parameter $\phi = 1, 5$ produced identical distributions which were ``nearly'' but not quite uniform; all other parameters produced the uniform distribution.
Figure (b) was also a cycle of size $24$ however now with initial state $\psi_c$.  Here all integer memory parameters, except for $\phi = 3,7$ produced unique, non-uniform, distributions (only three were plotted above).  All other parameters (including $\phi = 3,7$) produced the uniform distribution.  Similar behavior occurred for all cycle sizes divisible by $12$.  When $d$ is not divisible by $12$, all memory parameters $\phi \ne 0,2,4,6$ seem to produce the uniform distribution.}\label{fig:graphs12}
\end{figure}

\begin{figure}
  \centering
  \subfloat[$d = 11$; Initial State $\psi_b$]{\includegraphics[width=150pt]{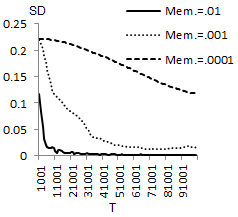}} \qquad
  \subfloat[$d = 11$; Initial State $\psi_b$]{\includegraphics[width=170pt]{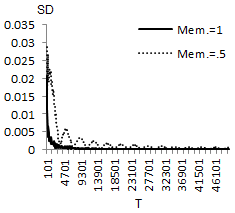}} \qquad\newline\newline
\caption{Let $SD = SD(T) = \frac{1}{2}\sum_{n=0}^{d-1}\left|\frac{1}{T}\sum_{t=0}^{T-1}p(n,t) - \frac{1}{d}\right|$ be the statistical distance between the time averaged distribution for finite $T$ and the uniform distribution.  For the memory parameters above, we've observed that, as $T \rightarrow \infty$, the time averaged distribution converges to uniform.  These graphs demonstrate how the memory parameter affects the rate of convergence.  All graphs use $d = 11$ and initial state $\psi_b$.}\label{fig:mixingtime}
\end{figure}

Finally, Figure \ref{fig:init-graphs} (d) betrays a third possible behavior of this walk model: for initial state $\psi_c$ and memory parameter $\phi = 6$, the limiting distribution appears to be uniform for all odd cycle sizes.  We verified this for all odd $d \le 400$.  As a consequence of Theorem \ref{thm:first}, a similar statement may be made for initial state $\psi_d$ and memory parameter $\phi = 0$.  Note, however, that this did not hold for even cycles (for either $\psi_c$ or $\psi_d$).

Of course, as mentioned, the value of $\phi$ would probably affect the \emph{mixing time} of the distribution - a quantity which depends on the difference between non-equal eigenvalues: the closer the eigenvalues are to one another, the longer it takes for the distribution to converge to its limiting distribution \cite{QW-cycle-first}.  Analyzing this further would be interesting; however, it seems that, in order to progress further in this direction, a simplified expression for the eigenvalues would be required.

In summary, we observe that the behavior of the recycled walk on the $d$-cycle depends on the following:
\begin{enumerate}
  \item The memory parameter $\phi \in [0,8)$; however, we proved that for certain initial conditions, certain memory values induce the same distribution.  Also, we observed (without proof) that only $\phi = 0,1,2,4,5,6$ potentially produced non-uniform behavior.  If this is true, then, combining this with Theorem \ref{thm:first}, this tells us we need only focus on memory parameters $\phi = 0, 1, 2$.
  \item The size of the cycle, modulo $4$ affects the distribution.
  \item If the size of the cycle is divisible by $12$, this also appears to affect the distribution.  In particular, if the cycle size is not divisible by twelve, memory parameters $\phi = 1, 5$ seem to produce the uniform distribution (this is not proven).  If true, then, combining with Theorem \ref{thm:first}, only memory parameters $\phi = 0, 2$ require further investigation (when $d$ is not divisible by $12$).
  \item The initial state also affects the distribution.
\end{enumerate}
We stress that, with the exception of Theorem \ref{thm:first}, these are observations which we were only able to verify numerically.

\subsection{Comparison with Quantum Walk with Memory}

Though the memory model of \cite{QW-memory-cycle}, using a Hadamard coin operator, and the recycled coin model \cite{QW-recy-coin}, using the unbalanced coin from Equation \ref{eq:coin-flip}, are seemingly quite different, we encountered an interesting relationship between the two while performing our analysis.  To distinguish between the memory model and the recycled coin model, when referring to the former, we shall always write any mathematical definition with a subscript $\mathcal{M}$.  For example, the probability of the particle being measured in position $\ket{n}$ after time $t$ is given by $p_\mathcal{M}(n,t)$; the state of the quantum walk in the memory model is $\psi_{\mathcal{M}}(n,t)$; and so on.

With this notation, we observed, and now prove, an interesting relationship between the two models:

\begin{theorem}\label{thm:second}
Let $d \ge 1$, $t \ge 0$, and $\psi(0,0)$ be an initial state for the recycled coin walk.  Let $P$ be the following permutation matrix:
\begin{equation}\label{eq:perm}
P = \left(\begin{array}{cccc}
1&0&0&0\\
0&0&1&0\\
0&0&0&1\\
0&1&0&0\end{array}\right),
\end{equation}
then it holds that $p_\mathcal{M}(n,t;P^*\psi(0,0)) = p(n,t, 2; \psi(0,0))$ for every $n$.  Of course this implies $\bar{p}_\mathcal{M}(n;P^*\psi(0,0)) = \bar{p}(n,2;\psi(0,0))$ trivially.
\end{theorem}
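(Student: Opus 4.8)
The plan is to mimic the Fourier-space argument of Theorem~\ref{thm:first}: I will show that the single-step transfer matrix of the memory walk is obtained from that of the recycled walk at $\phi=2$ by conjugation with the permutation $P$, and then propagate this relation through the entire evolution. Since $\phi=2$ gives $\theta=\frac{\pi}{4}(1+2)=\frac{3\pi}{4}$, so that $\cos\theta=-\tfrac{1}{\sqrt2}$ and $\sin\theta=\tfrac{1}{\sqrt2}$, the active block of $M_k(\theta)$ in Equation~\ref{eq:Mk} degenerates into a signed Hadamard block; this coincidence is exactly what makes a comparison with the Hadamard-coined memory walk feasible.

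First I would repeat, essentially verbatim, the Fourier computation of Section~III for the memory model introduced in Section~II. Writing $\psi_{\mathcal{M}}(n,t)$ for its four-component amplitude vector, applying the Hadamard coin $I_P\otimes I_M\otimes C(\pi/4)$ and then the memory shift $S_{\mathcal{M}}$ (read off the table in Section~II) yields a recursion $\psi_{\mathcal{M}}(n,t+1)=M_+^{\mathcal{M}}\psi_{\mathcal{M}}(n+1,t)+M_-^{\mathcal{M}}\psi_{\mathcal{M}}(n-1,t)$ and hence, after transforming, $\widetilde\psi_{\mathcal{M}}(k,t+1)=M_k^{\mathcal{M}}\widetilde\psi_{\mathcal{M}}(k,t)$ for a matrix $M_k^{\mathcal{M}}$ built just as in Equation~\ref{eq:Mk}. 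The only subtlety at this stage is to fix the ordering of the coin/memory basis consistently with the ordering used for the recycled walk; this ordering is precisely what the permutation $P$ will reconcile.

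The heart of the proof is then the purely algebraic identity
\[
M_k^{\mathcal{M}}=P^{*}\,M_k\!\left(\tfrac{3\pi}{4}\right)P,\qquad\text{for every }k.
\]
I would verify this exactly as Theorem~\ref{thm:first} verifies its analogue: take an eigenpair $(\lambda,\ket{\phi})$ of $M_k(3\pi/4)$ and check directly, entry by entry, that $P^{*}\ket{\phi}$ is an eigenvector of $M_k^{\mathcal{M}}$ with the same eigenvalue $\lambda$; since the eigenvectors span, this is equivalent to the displayed conjugation. This is the step I expect to be the main obstacle. It relies on the exact values $\cos\theta=-\sin\theta=-\tfrac{1}{\sqrt2}$ at $\phi=2$, and it requires the memory transfer matrix to be written in the correct basis order, so that the row permutation and sign changes encoded by $P$ exactly account for the difference between the memory shift $S_{\mathcal{M}}$ and the recycled walk's shift-plus-swap $(I_P\otimes M)\cdot S$. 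Getting this bookkeeping right simultaneously for the $M_+$ and $M_-$ parts is the delicate point.

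Granting the identity, the conclusion is immediate. Both walks start localized at position $0$, so by the argument around Equation~\ref{eq:start-Fourier} their Fourier transforms are constant in $k$; with the memory walk started in $P^{*}\psi(0,0)$ this gives $\widetilde\psi_{\mathcal{M}}(k,0)=P^{*}\psi(0,0)=P^{*}\widetilde\psi(k,0)$. Using the conjugation together with $PP^{*}=I$, a telescoping computation gives $[M_k^{\mathcal{M}}]^{t}=P^{*}[M_k(3\pi/4)]^{t}P$, whence $\widetilde\psi_{\mathcal{M}}(k,t)=[M_k^{\mathcal{M}}]^{t}P^{*}\widetilde\psi(k,0)=P^{*}[M_k(3\pi/4)]^{t}\widetilde\psi(k,0)=P^{*}\widetilde\psi(k,t)$. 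Inverting the Fourier transform yields $\psi_{\mathcal{M}}(n,t)=P^{*}\psi(n,t)$ for all $n$ and $t$, and since $P$ is a permutation, hence unitary, we conclude $p_{\mathcal{M}}(n,t;P^{*}\psi(0,0))=\|\psi_{\mathcal{M}}(n,t)\|^{2}=\|\psi(n,t)\|^{2}=p(n,t,2;\psi(0,0))$ for every $n$. The stated identity for the time-averaged distributions then follows trivially.
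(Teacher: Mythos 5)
Your proposal is correct, and its core insight is the same as the paper's: at $\phi=2$ (so $\theta=3\pi/4$) the permutation $P$ intertwines the Fourier-space transfer matrices of the two walks. What differs is how the insight is assembled into a proof. The paper never forms your conjugation identity and never touches the time evolution directly: it quotes the memory walk's transfer matrix $N_k$ and the closed-form spectral expression for $p_{\mathcal{M}}(n,t)$ from Ref.~\cite{QW-memory-cycle}, proves the eigenpair correspondence $(\lambda,\ket{v})\mapsto(\lambda,P\ket{v})$ from $N_k$ to $M_k(3\pi/4)$, checks that inner products and the coefficients $\alpha_j$ are unchanged, and concludes that the two spectral sums (Equations~\ref{eq:pr-closed} and \ref{eq:pr-memory}) agree term by term. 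You instead prove the operator identity $M_k^{\mathcal{M}}=P^*M_k(3\pi/4)P$ and telescope it through the evolution to obtain $\psi_{\mathcal{M}}(n,t)=P^*\psi(n,t)$ for all $n,t$ --- an amplitude-level statement strictly stronger than the probability equality being claimed, which bypasses the eigendecomposition and the limiting-distribution machinery entirely; the cost is that you must derive the memory walk's transfer matrix yourself rather than cite it. Two remarks on your execution. First, once both $4\times4$ matrices are in hand, your ``main obstacle'' is a finite computation: with $N_k$ as displayed in the paper one checks $PN_kP^*=M_k(3\pi/4)$ by direct multiplication, so routing the verification through eigenpairs (which additionally needs completeness, i.e.\ unitarity of $M_k$) is an unnecessary detour. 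Second, the bookkeeping worry you flag is genuine and is exactly where care is needed: the stated $P$ is tied to the component ordering implicit in $N_k$ of Ref.~\cite{QW-memory-cycle}; if you derive the memory walk's transfer matrix in the lexicographic (memory, coin) ordering of basis states $\ket{\MoveLeft\MoveLeft},\ket{\MoveLeft\MoveRight},\ket{\MoveRight\MoveLeft},\ket{\MoveRight\MoveRight}$, you obtain $N_k$ conjugated by the transposition of the two middle components, and the correct intertwiner with $M_k(3\pi/4)$ becomes $P$ composed with a transposition rather than $P$ itself. The resulting statement is physically identical, but to land on the theorem exactly as stated (with this $P$ and initial state $P^*\psi(0,0)$) your derivation must match the reference's convention --- which is presumably why the paper simply quotes $N_k$ rather than re-deriving it.
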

\begin{proof}
In Ref \cite{QW-memory-cycle}, the authors, using Fourier analysis, showed:
\begin{align}
p_{\mathcal{M}}(n,t) = &\frac{1}{d^2}\sum_{k,m=0}^{d-1}\sum_{j,l=1}^4e^{2\pi in(m-k)/d}\alpha_{j,\mathcal{M}}^*(k)\alpha_{l,\mathcal{M}}(m)\label{eq:pr-memory}\\
&\times \braket{\phi_{j,\mathcal{M}}(k)|\phi_{l,\mathcal{M}}(m)}\left(\lambda_{j,\mathcal{M}}^*(k)\lambda_{l,\mathcal{M}}(m)\right)^t,\notag
\end{align}
where $\{\ket{\phi_{j,\mathcal{M}}(k)}\}_{j=1}^4$ and $\{\lambda_{j,\mathcal{M}}(k)\}_{j=1}^4$ are the eigenvectors and corresponding eigenvalues of the matrix:
\[
N_{k} = \frac{1}{\sqrt{2}}\left(\begin{array}{cccc}
e^{2\pi ik/d} &0 & e^{2\pi ik/d} & 0\\
0 & e^{-2\pi ik/d} & 0 & e^{-2\pi ik/d}\\
0 & e^{2\pi ik/d} & 0 & -e^{2\pi ik/d}\\
e^{-2\pi ik/d} & 0 & -e^{-2\pi ik/d} & 0
\end{array}\right),
\]
and $\alpha_{j,\mathcal{M}}(k) = \braket{\phi_{j,\mathcal{M}}(k) | \psi_{\mathcal{M}}(0,0)}$.

We claim that, if $\phi = 2$ (thus $\theta = 3\pi/4$), then for every $k$ and $j$, it holds that $\lambda_j(k,2) = \lambda_{j,\mathcal{M}}(k)$ and $\ket{\phi_j(k,2)} = P\ket{\phi_{j,\mathcal{M}}(k)}$ where $P$ is the permutation matrix from Equation \ref{eq:perm}, and $\{\ket{\phi_{j,\mathcal{M}}(k)}\}_{j=1}^4$, $\{\lambda_{j,\mathcal{M}}(k)\}_{j=1}^4$ is an eigensystem of $N_k$.

Let $x = \exp(2\pi ik/d)$ and $y = \exp(-2\pi ik/d)$, then we may write $M_k(\theta)$ (Equation \ref{eq:Mk}) as:
\[
M_k(\theta) = \frac{1}{\sqrt{2}}\left(\begin{array}{cccc}
x&x&0&0\\
0&0&-x&x\\
y&-y&0&0\\
0&0&y&y\\
\end{array}\right).
\]

We may also write $N_k$ as:
\[
N_k = \frac{1}{\sqrt{2}}\left(\begin{array}{cccc}
x&0&x&0\\
0&y&0&y\\
0&x&0&-x\\
y&0&-y&0\\
\end{array}\right).
\]

Consider one of the eigenvectors of $N_k$, $\ket{\phi_{j,\mathcal{M}}(k)}$ and write it as $\ket{\phi_{j,\mathcal{M}}(k)} = (\alpha, \beta, \gamma, \delta)^T$.  Since $\lambda_{j,\mathcal{M}}(k)$ is its corresponding eigenvalue, it holds that:
\begin{equation}\label{eq:Nk-eigensystem}
N_k\ket{\phi_{j,\mathcal{M}}(k)} = \frac{1}{\sqrt{2}}\left(\begin{array}{c}
x\alpha + x\gamma\\
y\beta + y\delta\\
x\beta - x\delta\\
y\alpha - y\gamma
\end{array}\right) = \lambda_{j,\mathcal{M}}(k)\left(\begin{array}{c}
\alpha\\
\beta\\
\gamma\\
\delta
\end{array}\right).
\end{equation}

Now let $\ket{\phi_j(k,\theta)} = P\ket{\phi_{j,\mathcal{M}}(k)} = (\alpha, \gamma, \delta, \beta)^T$.  We show this is an eigenvector of $M_k(\theta)$ with eigenvalue $\lambda_{j}(k,\theta) = \lambda_{j,\mathcal{M}}(k)$:
\[
M_k(\theta)\ket{\phi_j(k,\theta)} = \frac{1}{\sqrt{2}}\left(\begin{array}{c}
x\alpha + x\gamma\\
x\beta - x\delta\\
y\alpha - y\gamma\\
y\beta + y\delta
\end{array}\right).
\]
Combining this with Equation \ref{eq:Nk-eigensystem}, we find:
\[
M_k(\theta)\ket{\phi_j(k,\theta)} = \lambda_{j,\mathcal{M}}(k)\left(\begin{array}{c}
\alpha\\
\gamma\\
\delta\\
\beta
\end{array}\right) = \lambda_{j}(k,\theta)\ket{\phi_j(k,\theta)}.
\]
Because $P^*P = I$, it is clear that, for every $k,m,j$ and $l$, it holds:
\begin{align*}
\braket{\phi_j(k,\theta)|\phi_l(m,\theta)} = \braket{\phi_{j,\mathcal{M}}(k)|P^* P|\phi_{l,\mathcal{M}}(m)} = \braket{\phi_{j,\mathcal{M}}(k)|\phi_{l,\mathcal{M}}(m)}.
\end{align*}

In particular, this shows us that the $\{\ket{\phi_j(k,\theta)}\}_{j=1}^4$ are orthonormal eigenvectors of $M_k(\theta)$.

If we let the initial state of the memory walk be $\psi_{\mathcal{M}}(0,0) = P^*\psi(0,0) \Rightarrow \psi(0,0) = P\psi_{\mathcal{M}}(0,0)$, then using Equation \ref{eq:alpha}:
\[
\alpha_j(k,\theta) = \braket{\phi_j(k,\theta) | \psi(0,0)} = \braket{\phi_{j,\mathcal{M}}(k) | P^* P |\psi_{\mathcal{M}}(0,0)} = \alpha_{j,\mathcal{M}}(k).
\]

Combining all of this, it is clear that $p_{\mathcal{M}}(n,t;P^*\psi(0,0))$ (Equation \ref{eq:pr-memory}) is exactly equal to $p(n,t,2;\psi(0,0))$ (Equation \ref{eq:pr}) for every $n$ and $t$.
\end{proof}

For example, if the initial state used, for both the recycled coin model and the memory model, is $\psi(0,0) = \ket{\MoveLeft\MoveLeft}$ these walks will produce exactly the same distribution if memory parameter $\phi = 2$ is used.

Thus, to analyze the quantum walk with memory model on the cycle, one may instead analyze the recycled coin model with parameter $\phi = 2$ and a potentially different initial state.  Or vice versa.  It might be that one model yields more easily to further analysis than the other.

\section{CLOSING REMARKS}

In this paper we analyzed the recycled coin discrete time quantum random walk model on the cycle.  Our analysis shows that this model's behavior depends in very interesting ways not only on the memory parameter, a continuous value in the range $[0,8)$, but also on the initial state, and the size of the cycle - in particular whether the cycle size is divisible by $12$ or not (compare Figure \ref{fig:init-graphs} with Figure \ref{fig:graphs12}); also on the residue of the cycle size modulo four (see Figure \ref{fig:stat-dist} for how this residue affects certain distributions).

Our analysis showed that, despite a seemingly wide range of possible behaviors, due to the additional memory parameter $\phi \in [0,8)$, there seem to be only a few cases of interest: in particular, only integer values of $\phi$ seem to potentially produce non-uniform distributions (non-integer values of $\phi$ producing a uniform time averaged distribution).  Of these integer values, only a few need be considered.  In particular, if the cycle size $d$ is divisible by $12$, then we observed only values $\phi = 0, 1, 2, 4, 5, 6$ produced non-uniform behavior.  If $d$ is not divisible by $12$ then only $\phi = 0, 2, 4, 6$ produced a non-uniform distribution.  This is a conjecture we verified for $d \le 300$ and we leave the proof as an open problem.  However, we note that, due to Theorem \ref{thm:first}, only a few of these cases need to be considered: $\phi = 0, 1, 2$ in the first case, and $\phi = 0, 2$ in the second.

Even if the limiting distribution is uniform for all non-integer values of $\phi$ as mentioned above, this memory parameter most likely influences the mixing time, as we mentioned last section.  Deriving a relationship between these two quantities, the parameter and the mixing time, could be potentially useful, and certainly interesting.

Our efforts to prove these last few points were hampered, however, by our inability to derive a simplified expression for the eigenvalues of $M_k(\theta)$ (Equation \ref{eq:Mk}).  Without these, we are impeded in this direction; perhaps an alternative method of proof may be required.

Finally, we only considered one type of coin flip operator - as mentioned in Ref \cite{QW-recy-coin}, there are a multitude of ways to relate the memory parameter with a coin flip operator.  The one described in Equation \ref{eq:coin-flip} was similar to one first mentioned in \cite{QW-recy-coin} and so seemed a good place to start.  It might be useful to consider others in the future - perhaps results similar to our Theorems \ref{thm:first} and \ref{thm:second} can be discovered linking the distributions, produced by these different coin operators, in some manner.

\end{document}